\theoremstyle{plain}
\newtheorem{Thm}{Theorem} 
\newtheorem{Lemma}[Thm]{Lemma}
\newtheorem*{Lemma*}{Lemma}
\theoremstyle{definition}
\newtheorem{Def*}{Definition}
\begin{document}

\title{Faulty picture-hanging improved}
\author{Johan Wästlund}
\date{\today}

\maketitle

\begin{abstract}  
A picture-hanging puzzle is the task of hanging a framed picture with a wire around a set of nails in such a way that it can remain hanging on certain specified sets of nails, but will fall if any more are removed. The classical brain teaser asks us to hang a picture on two nails in such a way that it falls when any one is detached. 

Demaine et al (2012) proved that all reasonable puzzles of this kind are solvable, and that for the $k$-out-of-$n$ problem, the size of a solution can be bounded by a polynomial in $n$. 
We give simplified proofs of these facts, for the latter leading to a reasonable exponent in the polynomial bound.
\end{abstract}

\section{Introduction} \label{S:intro}

A now famous picture-hanging puzzle was posed by A.~Spivak in 1997. In \cite{Spivak} (see also \cite{Sillke}), a ``brain teaser'' asked for an explanation of the following: A picture hangs on two nails, and the wire is wound around the nails in such a way that the picture will fall if any one of them is removed. One solution is shown in Figure~\ref{F:Spivak}. The puzzle has been popularized by several authors and video channels \cite{Parker, StandUp, Tan-Holmes, Winkler}.

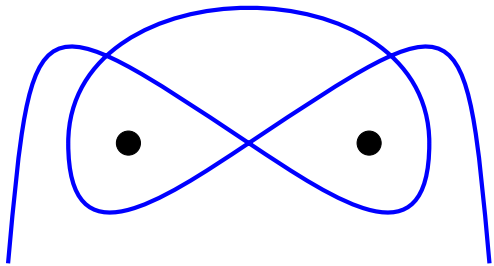
\begin{figure} [h]
\begin{center}
\begin{tikzpicture}[scale=0.8]

\filldraw (-2,0) circle (0.2);
\filldraw (2,0) circle (0.2); 

\draw[blue, ultra thick] (-4,-2) .. controls (-3.6, 2.4) .. (0,0)
(0,0) .. controls ( 1.5, -1) and (3,-2) .. (3,0) 
(3,0) .. controls (3,3) and (-3,3) .. (-3,0)
(-3,0) .. controls (-3,-2) and (-1.5, -1) .. (0,0)
(0,0) .. controls (3.6,2.4) .. (4,-2); 

\end{tikzpicture}
\caption{A solution to Spivak's picture-hanging puzzle. The black dots represent the nails. If any one of them is removed, it becomes apparent that the wire doesn't actually wind around the other, and the picture falls.}
\label{F:Spivak}
\end{center}
\end{figure}

A more detailed account of the history of the puzzle is found in \cite{Demaine}, which also discusses a connection to the Borromean rings. As was observed by Neil Fitzgerald \cite{Sillke}, the solution is a manifestation of the fact that the fundamental group of the plane minus two points is non-abelian. 

\begin{figure} [h]
\begin{center}
\begin{tikzpicture}

\filldraw (-2,0) circle (0.2);
\filldraw (2,0) circle (0.2); 

\draw[blue, ultra thick] (0,-3) .. controls (-5, 0) and (-2,4) .. (0,-3); 
\draw[blue, ultra thick] (0,-3) .. controls (2, 4) and (5,0) .. (0,-3); 

\node at (-3,1.3) {\Large $x$};
\node at (3,1.3) {\Large $y$};

\draw [blue, ultra thick] (-2.5, -1) -- (-2.5, -0.6) -- (-2.1, -0.7);
\draw [blue, ultra thick] (-1.3, -0.5) -- (-0.9, -0.6) -- (-0.9, -0.2);

\draw [blue, ultra thick] (2.7, -0.8) -- (2.3, -0.9) -- (2.25, -0.5);
\draw [blue, ultra thick] (1.1, -0.8) -- (1.0, -0.4) -- (0.6, -0.5);

\end{tikzpicture}
\caption{A pair of generators for the fundamental group of the plane minus two points. Since the group is non-abelian, an expression like $xy^{-1}x^{-1}y$ cannot be simplified to the identity. }
\label{F:fundamental}
\end{center}
\end{figure}
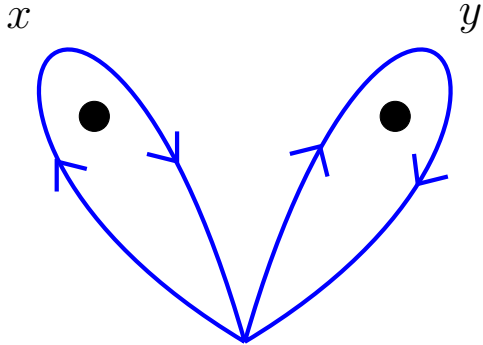

If we choose a base-point, say where the picture is attached to the wire, and take the directed closed curves $x$ and $y$ as generators of the fundamental group as in Figure~\ref{F:fundamental}, then the picture-hanging of Figure~\ref{F:Spivak} can be described by the word \[xy^{-1}x^{-1}y.\] 
Equally valid solutions are obtained by $xyx^{-1}y^{-1}$ etc. Since $x$ and $y$ don't commute, these words represent nontrivial group elements. But if either $x$ or $y$ is set to 1 (``quotiented out''), the whole expression will collapse to the identity. The operation of setting $x$ or $y$ equal to 1 corresponds to removing one of the nails so that the curve around that nail becomes contractible.

Once we have a group-theoretic version of the puzzle, we can prove the obvious generalization to $n$ nails. This was noted by Fitzgerald, Chris Lusby Taylor and others. Again we refer to \cite{Demaine} for the background.

\begin{Thm} \label{T:obvious}
For every $n$, it's possible to hang a picture on $n$ nails so that it falls on the removal of any one of them.
\end{Thm}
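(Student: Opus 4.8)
The plan is to pass to the fundamental group of the plane minus the $n$ nails, which (taking the base-point where the wire is attached, as in Figure~\ref{F:fundamental}) is the free group $F_n$ on generators $x_1,\dots,x_n$, where $x_i$ runs once around the $i$-th nail. A picture-hanging around these nails is then encoded by a word $w\in F_n$, and removing the $i$-th nail corresponds to applying the homomorphism $q_i\colon F_n\to F_{n-1}$ that sends $x_i\mapsto 1$ and fixes the remaining generators, since the loop around a removed nail becomes contractible. So the theorem reduces to producing a word $w$ with $w\neq 1$ in $F_n$ but $q_i(w)=1$ for every $i$.

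For such a $w$ I would take the iterated (left-normed) commutator, defined recursively by $w_1=x_1$ and $w_k=[w_{k-1},x_k]=w_{k-1}\,x_k\,w_{k-1}^{-1}\,x_k^{-1}$ for $2\le k\le n$, and set $w=w_n$; for $n=2$ this is just $x_1x_2x_1^{-1}x_2^{-1}$, the word behind the classical puzzle. The ``falls on removal'' half is then a one-line induction on $k$: applying $q_k$ turns $w_k$ into $[w_{k-1},1]=1$, while applying $q_i$ with $i<k$ turns $w_{k-1}$ into $1$ by the induction hypothesis, hence turns $w_k$ into $[1,x_k]=1$.

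The substantive point, and what I expect to be the main obstacle, is showing that $w_n\neq 1$ in $F_n$, i.e.\ that the iterated commutator does not collapse in the free group itself. I would prove this by exhibiting a single convenient homomorphism into a matrix group where the image is manifestly not the identity: send $x_i$ to the unipotent matrix $I+E_{i,i+1}\in\mathrm{GL}_{n+1}(\mathbb{Z})$, where $E_{pq}$ denotes the matrix unit with a $1$ in position $(p,q)$. A short induction then shows $w_k\mapsto I+E_{1,k+1}$, the base case being $w_1=x_1\mapsto I+E_{1,2}$: if $w_{k-1}\mapsto I+E_{1,k}$, then using $(I+E_{1,k})^{-1}=I-E_{1,k}$ together with $E_{1,k}E_{k,k+1}=E_{1,k+1}$ and $E_{k,k+1}E_{1,k}=0$, one checks by direct expansion of $w_{k-1}\,x_k\,w_{k-1}^{-1}\,x_k^{-1}$ that all potential higher-order terms cancel and the image is exactly $I+E_{1,k+1}$. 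In particular $w_n\mapsto I+E_{1,n+1}\neq I$, so $w_n\neq 1$, completing the proof. (Alternatively one could simply cite that left-normed basic commutators are nontrivial in free groups, via the lower central series, but the matrix calculation is short and self-contained, and it also shows that $w_n$ in fact stays up exactly on the full set of $n$ nails.)
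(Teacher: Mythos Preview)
Your proof is correct. The construction you use is actually the special case $k=n-1$ of the paper's divide-and-conquer (at each stage, split off the last nail and commute with the solution for the rest), and your ``falls on removal'' argument is the same induction as the paper's. Where you genuinely diverge is in establishing nontriviality of $w_n$: the paper simply notes that since $A$ and $B$ involve disjoint sets of generators, the concatenation $ABA^{-1}B^{-1}$ is already a reduced word in $F_n$ and hence nonidentity, whereas you produce an explicit unipotent representation $x_i\mapsto I+E_{i,i+1}$ into $\mathrm{GL}_{n+1}(\mathbb{Z})$ and compute that $w_n\mapsto I+E_{1,n+1}\neq I$. Your matrix calculation is a pleasant and self-contained alternative (and incidentally witnesses that $w_n$ lies exactly in the $n$-th term of the lower central series), but the paper's no-cancellation argument is shorter and, more importantly, works uniformly for \emph{any} split point $k$; taking $k\approx n/2$ then yields words of length about $n^2$ rather than the exponential length $3\cdot 2^{n-1}-2$ of your left-normed commutator.
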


\begin{proof}
For $n$ nails let the generators of the fundamental group be $x_1,\dots ,x_n$, where $x_i$ is given by looping clockwise around the $i$:th nail only. Invoking induction, let the word $A$ be a solution for $k$ nails $x_1,\dots, x_k$ for a suitably chosen $k<n$, and let $B$ be a solution for the remaining $n-k$ nails $x_{k+1},\dots, x_n$. Now take a commutator like $ABA^{-1}B^{-1}$ and substitute the $x_i$. Since the expressions we substitute for $A$ and $B$ are based on disjoint sets of generators, no cancellations can take place as it stands. But if any one of the generators $x_1,\dots,x_n$ is set to 1, one of $A$ and $B$ will collapse to the identity, and so will the commutator.
\end{proof}

By taking $k\approx n/2$, this divide-and-conquer approach gives a solution to the $n$ nails puzzle that corresponds to a word of length about $n^2$ in the fundamental group. Actually the length will be exactly $n^2$ whenever $n$ is a power of 2, with linear interpolation between the powers of 2. For instance, when $n=4,5,6,7,8$ we obtain words of length $16, 28, 40, 52, 64$. The question whether this is the minimal length of a nontrivial word that collapses on setting any one of the generators to 1 is mentioned in \cite{Demaine} and seems to be an open problem. A nontrivial lower bound was obtained in \cite{FulekAvvakumov}. 

As a side-remark, the case $n=3$ is related to another puzzle named \emph{Curve and Three Shadows} in \cite{Winkler}. The solutions found by John Terrell Rickard and Donald Knuth encode picture-hangings on three nails that will fall on the removal of any one of them, but in that puzzle, even more was required.  
 
Notice that the fundamental group consists of equivalence classes of curves under homotopy, meaning that the curve is allowed to pass through itself (just not through the nails). Therefore in order to get from a word like \begin{equation} \label{ex4} x_1x_2x_1^{-1}x_2^{-1} x_3x_4x_3^{-1}x_4^{-1}x_2x_1x_2^{-1}x_1^{-1}x_4x_3x_4^{-1}x_3^{-1}\end{equation} (for $n=4$) to a solution to the picture-hanging puzzle with a physical  wire that cannot pass through itself, we must be careful to make sure that the picture actually falls when a nail is removed. A construction by Michael Paterson depicted in \cite{Demaine} shows a more efficient solution relying on the fact that a physical wire can lock itself even though its path corresponds to the identity element of the fundamental group. We return to this issue in Section~\ref{S:physics}.

\section{Fundamental theorem of picture-hanging} \label{S:fundamental}

The group-theoretic setting allows us to investigate picture-hanging puzzles more generally. For instance, the word $xyzx^{-1}y^{-1}z^{-1}$ shows that we can hang a picture on three nails so that it stays on the removal of any one nail, but falls on the removal of any two. 

Considering other variations like the puzzles listed in \cite{Demaine}, we are led to the conjecture that all reasonable tasks of this kind are solvable, and this is indeed the case, as was shown in \cite{Demaine}. Here we present two slightly different proofs that are simpler than the one given in \cite{Demaine}.

In order to make precise the concept of a reasonable picture-hanging puzzle, we can encode a given task for $n$ nails as a boolean function of $n$ variables. Although it's not important, we prefer a convention opposite to the one in \cite{Demaine}, and take the $i$:th input variable to be \emph{true} if the $i$:th nail is present, and \emph{false} if it's removed. Consequently the output is true if the picture remains hanging, and -- homophonic mnemonic -- false if it falls. 

If $S$ is a set of generators/nails, we can denote by $\Lambda(S)$ a generic nontrivial word with the property (as in Theorem~\ref{T:obvious}) of collapsing to the identity whenever one of the generators in $S$ is set to 1. Here the letter $\Lambda$ is meant to resemble the symbol $\bigwedge$ that stands for logical AND.

Not all boolean functions can be realized as picture-hangings. For instance we can't ask the picture to hang if we remove all nails. It also seems that the function must be monotone: If the picture hangs on a certain set of nails, it can't fall because more nails are present. Topologically, a nail only restricts the movement of the wire and cannot allow it to untangle if it couldn't otherwise. And in the group theoretical setting, if a word simplifies to the identity, it can't evaluate to something else because more generators are set equal to 1. 
We return briefly in Section~\ref{S:physics} to the question whether monotonicity is actually necessary in a physical model.   

\begin{Thm} \label{T:fundamental}
Every nontrivial monotone boolean function can be realized as a picture-hanging.
\end{Thm}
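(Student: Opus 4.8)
The plan is to induct on the number $n$ of nails, building the word by a Shannon‑type cofactor recursion in which a single commutator is the only gadget. Throughout, a realizable puzzle is a monotone $f\colon\{0,1\}^n\to\{0,1\}$ with $f(0,\dots,0)=0$, and ``nontrivial'' additionally means $f\not\equiv 0$, so that $f(1,\dots,1)=1$. I want to attach to each such $f$ a word $W_f$ in the free group $F(x_1,\dots,x_n)$ which is nontrivial and which, whenever the nails with indices in a set $R$ are removed (the corresponding generators set to $1$), becomes trivial precisely when $f$ is false on the remaining nails. For $n=1$ the only possibility is $f=x_1$, realized by $W_f=x_1$.

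For the inductive step, set $g=f|_{x_n=0}$ and $h=f|_{x_n=1}$; both are monotone in $x_1,\dots,x_{n-1}$, we have $g\le h$ pointwise, and $g(0,\dots,0)=0$. Monotonicity forces the dichotomy $h\equiv 1$ or $h(0,\dots,0)=0$. If $h\equiv 1$ then $f=x_n\vee g$, and since $x_n$ does not occur in the recursively obtained $W_g$ (take the empty word if $g\equiv 0$) the word $W_f:=x_n\,W_g$ works: removing $R$ leaves $x_n\bar W_g$, or $\bar W_g$ if $n\in R$, which is trivial iff $x_n$ is absent and $\bar W_g$ is trivial, matching $f=x_n\vee g$. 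Otherwise $h(0,\dots,0)=0$; recursion gives $W_h$ and $W_g$, and I put
\[
W_f \;:=\; [\,x_n,\,W_h\,]\,W_g \;=\; x_n\,W_h\,x_n^{-1}\,W_h^{-1}\,W_g .
\]
Since $g\le h$ we have the Boolean identity $f=(x_n\wedge h)\vee g$, and $W_f$ mirrors it: after removing $R$, if $n\in R$ or $\bar W_h=1$ then the bracket collapses and $\bar W_f$ reduces to $\bar W_g$, exactly as $f$ reduces to $g$ in those cases; while if $n\notin R$ and $\bar W_h\neq 1$ then $f$ is true, and $\bar W_f=x_n\,\bar W_h\,x_n^{-1}\,(\bar W_h^{-1}\bar W_g)$ is visibly a nonempty reduced word --- it begins with the letter $x_n$, which cannot be cancelled since $\bar W_h$ and $\bar W_g$ are words in $x_1,\dots,x_{n-1}$ only. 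Taking $R=\emptyset$ shows $W_f$ itself is nontrivial (as $f(1,\dots,1)=1$), completing the induction; applied to $f=x_1\wedge\dots\wedge x_n$ this recovers a solution of the puzzle of Theorem~\ref{T:obvious}, exhibiting a $\Lambda(S)$ as the iterated commutator $[x_{\max S},[\dots,[x_2,x_1]\dots]]$.

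The only place that needs genuine care is this last reduction argument: one must keep track of the fact that both cofactor words live in the free group on $x_1,\dots,x_{n-1}$, so that the two explicitly written occurrences of $x_n^{\pm1}$ survive free reduction and the gadget cannot unexpectedly collapse when both cofactors are active. Everything else --- that cofactors of monotone functions are monotone, that the displayed dichotomy is exhaustive, and the base case --- is routine, and the whole word is read off mechanically from any formula for $f$, at the (for this theorem irrelevant) cost of a length that may grow exponentially in $n$.
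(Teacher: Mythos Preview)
Your proof is correct and takes a genuinely different route from either of the paper's two proofs. The paper argues (i) by writing $f$ in ``DNF'' as an OR of its minimal true sets and concatenating the corresponding $\Lambda(S_i)$, and (ii) by building \emph{safe} AND/OR gates via padding with large powers $x^M,y^M$ so that unwanted cancellations are ruled out. You instead induct on the number of variables via the Shannon cofactors $g=f|_{x_n=0}$, $h=f|_{x_n=1}$ and realize $f=(x_n\wedge h)\vee g$ by the single gadget $[x_n,W_h]\,W_g$. The crucial observation---that the commutator here needs no padding because the generator $x_n$ is \emph{fresh} (it does not occur in $W_h$ or $W_g$), so the leading $x_n$ survives free reduction whenever $\bar W_h\neq 1$---is exactly the disjoint-support trick from the proof of Theorem~\ref{T:obvious}, now lifted to arbitrary monotone $f$. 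This is arguably more elementary than either argument in the paper: no minimality analysis, no padding lemma.

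Two small points worth tightening. First, in your Case~2 you should state explicitly that the convention $W_g=1$ when $g\equiv 0$ is still in force (you only say it in Case~1). Second, the phrase ``exactly as $f$ reduces to $g$'' is a little loose in the sub-case $n\notin R$, $\bar W_h=1$: there $f$ evaluates via $h$, not $g$; the point is that $g\le h$ forces $g=0$ as well, so $\bar W_g=1$ and both sides agree. As for what the approaches buy: the paper's second proof translates an arbitrary AND/OR formula and so tracks formula size (this is what feeds into the polynomial bounds of Section~4), whereas your recursion can blow up to length $O(5^n)$ in the worst case---fine for the present existence theorem, irrelevant there, but not a substitute for the later quantitative arguments.
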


\begin{proof} [First proof of Theorem~\ref{T:fundamental}]
For a given nontrivial monotone boolean function $f$, let $S_1, \dots, S_N$ be the \emph{minimal} sets where $f$ evaluates to \emph{true}, in other words the minimal sets of nails where the picture is supposed to  keep hanging. 
We claim that the word \begin{equation} \label{firstproof} \Lambda(S_1)\cdot \Lambda(S_2)\cdots \Lambda(S_N)\end{equation}
represents the function $f$.

In one direction it's clear: If we quotient out enough generators that none of the sets $S_i$ remains intact, then \eqref{firstproof} collapses to the identity, as it should. 

On the other hand it might not be obvious that \eqref{firstproof} can't also collapse if enough generators remain that several factors are nontrivial.
However, if the set of remaining generators/nails is \emph{equal} to one of the sets $S_i$, then by minimality \emph{only} the factor $\Lambda(S_i)$ is nontrivial, and the whole expression \eqref{firstproof} evaluates to that factor. Finally we already know that monotonicity holds: The expression \eqref{firstproof} can't become the identity because more than a minimal set of generators remain. 
\end{proof}

Our second proof is based on emulating logical AND- and OR-gates, as in the proof in \cite{Demaine}.
 
If $A$ and $B$ are words, the new word $ABA^{-1}B^{-1}$ might collapse even if both $A$ and $B$ are distinct from the identity: The equation $ABA^{-1}B^{-1} = 1$ is equivalent to $AB = BA$, which means that $ABA^{-1}B^{-1}$ collapses precisely when $A$ and $B$ commute. This can happen in some ways that aren't completely trivial, for instance if $A = xyx^{-1}$ and $B= xy^3x^{-1}$.
For this reason we can't always use the commutator $ABA^{-1}B^{-1}$ as an AND-gate: While it's true that the commutator becomes the identity unless both $A$ and $B$ are nontrivial, the converse is not true.

Similarly, $AB$ may collapse even if $A\neq 1$ and $B\neq 1$, simply by $A$ and $B$ being inverses. Therefore we cannot always use the product of two words as an OR-gate.  

Demaine et al \cite{Demaine} show how to construct ``safe'' logical gates by applying theorems of A.~I.~Mal'tsev and G.~A.~Gurevich. Here we point out that a simplification is possible. In order to obtain a safe OR-gate, we take two arbitrary generators $x$ and $y$ (we must assume that the number of generators/nails is at least 2), and \emph{pad} the words $A$ and $B$ by surrounding them by the symbols $x$, $y$ and their inverses in the following way:

\begin{equation} \label{safeOR} (x^MAx^{-M}) \cdot (y^MBy^{-M}).\end{equation}

Again it's clear that if $A = B =1$, then \eqref{safeOR} will collapse to 1 as well. Moreover, if one of $A$ and $B$ is nontrivial and the other is 1, then \eqref{safeOR} will be nontrivial. In the final case that $A$ and $B$ are both nontrivial, we claim that provided $M$ is large enough, \eqref{safeOR} can't collapse. This is based on the following easy lemma:

\begin{Lemma} 
If the symbols $x$ and $x^{-1}$ both occur fewer than $M$ times in the word $A\neq 1$, then \[ x^MAx^{-M}\] will simplify to a word that both begins and ends with one of the symbols $x$ or $x^{-1}$. 
\end{Lemma}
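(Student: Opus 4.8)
The plan is to work in the free group, assume without loss of generality that $A$ is a reduced word (passing from $A$ to its reduced form only decreases the number of occurrences of $x$ and of $x^{-1}$ and keeps $A$ nontrivial, so the hypothesis is preserved), and then simply peel the maximal powers of $x$ off the two ends: write $A = x^s A' x^t$ with $s,t\in\mathbb{Z}$, where $A'$ is either the empty word or a reduced word whose first and last letters are generators different from $x^{\pm1}$.

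The single point that actually uses the hypothesis is the following estimate. If $s<0$, then $A$ begins with $|s|$ copies of $x^{-1}$, so $|s|$ is at most the total number of occurrences of $x^{-1}$ in $A$, which is strictly less than $M$; combined with the case $s\ge 0$ this gives $s>-M$, that is $M+s\ge 1$. (Note also $M\ge 1$, since $x$ occurs a nonnegative number of times and fewer than $M$.) Symmetrically, counting occurrences of $x$ gives $t<M$, that is $t-M\le -1$.

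Then I would just compute $x^M A x^{-M} = x^{M+s}\,A'\,x^{t-M}$. If $A'$ is nonempty it begins and ends with a non-$x$ generator, so no cancellation occurs at either junction; the word on the right is therefore reduced, it begins with $x$ because $M+s\ge 1$, and it ends with $x^{-1}$ because $t-M\le -1$. If $A'$ is empty, then $A=x^k$ for some $k\ne 0$ with $|k|<M$, and $x^M A x^{-M}=x^k$, which begins and ends with $x$ (when $k>0$) or with $x^{-1}$ (when $k<0$). In either case the conclusion holds.

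There is no serious obstacle here; the only thing to watch is precisely the strict inequality in the estimate above, since it is exactly what prevents the exponents $M+s$ and $t-M$ from vanishing, which would leave a non-$x$ letter of $A'$ exposed at an end and break the claim. So the ``hard part'', such as it is, amounts to making sure the bookkeeping uses ``fewer than $M$'' rather than ``at most $M$''.
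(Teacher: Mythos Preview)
Your proof is correct and follows essentially the same approach as the paper: both separate the case where $A$ is a nontrivial power of $x$ from the case where some other generator survives in $A$, and in the latter case use the bound on the number of $x^{\pm1}$'s to show that the padding cannot be entirely cancelled. Your explicit decomposition $A=x^sA'x^t$ is just a slightly more formal packaging of the paper's informal cancellation count.
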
  

\begin{proof}
If $A$ is a nontrivial power of $x$, then so is the simplified word. Otherwise there is some symbol other than $x$ and $x^{-1}$ that remains after simplifying $A$. In that case no symbol from the left padding will cancel against any symbol of the right padding. Since every cancellation will use up one symbol from $A$ and at most one from the padding, this means that at least one symbol will remain from each side of the padding.
\end{proof}

In the same way we can construct a safe AND-gate by starting from the expression $ABA^{-1}B^{-1}$ and pad to avoid unwanted cancellations:

\begin{equation} \label{safeAND} 
(x^MAx^{-M}) \cdot (y^MBy^{-M}) \cdot (x^MA^{-1}x^{-M}) \cdot (y^MB^{-1}y^{-M}).
\end{equation}
Again provided $x$ and $x^{-1}$ occur fewer than $M$ times in $A$, and $y$ and $y^{-1}$ occur fewer than $M$ times in $B$,  the expression \eqref{safeAND} can't collapse unless one of $A$ and $B$ does.
We thereby obtain a second proof: 
\begin{proof} [Second proof of Theorem~\ref{T:fundamental}] It's a fact of basic propositional logic that the nontrivial monotone boolean functions are precisely those that can be expressed by nesting AND- and OR-gates. An expression in terms of such gates can be translated to the required word in the free group by repeatedly applying \eqref{safeOR} and \eqref{safeAND}.  
\end{proof}

\section{Physics is complicated} \label{S:physics}
In order to argue that Theorem~\ref{T:fundamental} is the end of the story as far as solvability of picture-hanging puzzles goes, it seems we have to exclude some physics. With a wire that can't pass through itself and a finite gravitational field, we could solve even some non-monotone puzzles. 

\begin{figure} [h]
\begin{center}
\begin{tikzpicture} [scale=0.8]

\draw[ultra thick]  (-0.1,3.1) -- (0.4, 4.1);

\draw[ultra thick]  (-0.5,0.5) -- (-0.5, 2) -- (1,2) -- (1,0.5) -- cycle;

\draw[blue, ultra thick] (0,2) .. controls (0.5, 4) and (0, 4) .. (-1.2,2);
\draw[blue, ultra thick] (-1.2, 2) .. controls (-2.7, -0.5) .. (0,-0.5);

\draw[blue, ultra thick] (0,-0.5) .. controls (3, -0.5) and (3,0) .. (0,0);

\draw[blue, ultra thick] (0, 0) .. controls (-2.4, 0) .. (-1.2,1.9);
\draw[blue, ultra thick] (-1.2, 1.9) .. controls (0, 3.8) .. (0.5,2);

\filldraw (-0.1,3.1) circle (0.2); 
\filldraw (-2.2,-0.2) circle (0.2);
\draw[ultra thick]  (-2.2,-0.2) -- (-1.7, 0.8);

\node at (-0.4, 3.8) {\Large $A$};
\node at (-2.6, 0.4) {\Large $B$};

\begin{scope}[xshift = 7cm]

\draw[ultra thick]  (-0.1,3.1) -- (0.4, 4.1);

\draw[ultra thick]  (-0.5,0.5) -- (-0.5, 2) -- (1,2) -- (1,0.5) -- cycle;

\draw[blue, ultra thick] (0,2) .. controls (0.5, 4) and (0, 4) .. (-1.2,2);
\draw[blue, ultra thick] (-1.2, 2) .. controls (-2.7, -0.5) .. (0,-0.5);

\draw[blue, ultra thick] (0,-0.5) .. controls (3, -0.5) and (3,0) .. (0,0);

\draw[blue, ultra thick] (0, 0) .. controls (-2.4, 0) .. (-1.2,1.9);
\draw[blue, ultra thick] (-1.2, 1.9) .. controls (0, 3.8) .. (0.5,2);

\filldraw (-0.1,3.1) circle (0.2); 
\filldraw (-2.2,-0.2) circle (0.2);
\draw[ultra thick]  (-2.2,-0.2) -- (-1.7, 0.8);

\node at (-0.4, 3.8) {\Large $A$};
\node at (-2.6, 0.4) {\Large $B$};

\filldraw (1.8,-0.2) circle (0.2);
\draw[ultra thick]  (1.8,-0.2) -- (2.3, 0.8);
\node at (2.6, 0.4) {\Large $C$};

\end{scope}
\end{tikzpicture}
\caption{Left: A counterexample to monotonicity. If we hold the picture and wire in this position and let go, the nail $A$ will slow the picture down enough that everything falls. But if the nail $A$ hadn't been there, the picture would have fallen straight through the loop and become stuck around $B$. 
Right: The condition for the picture to remain hanging might not even be a function of the set of remaining nails. If we remove nail $A$, the picture will fall through the loop below it. If we then remove nail $C$, the picture will still be attached around nail $B$. If on the other hand we first remove $C$, the picture already falls (and remains fallen if we then remove nail $A$).}
\label{F:counterMonotonicity}
\end{center}
\end{figure}
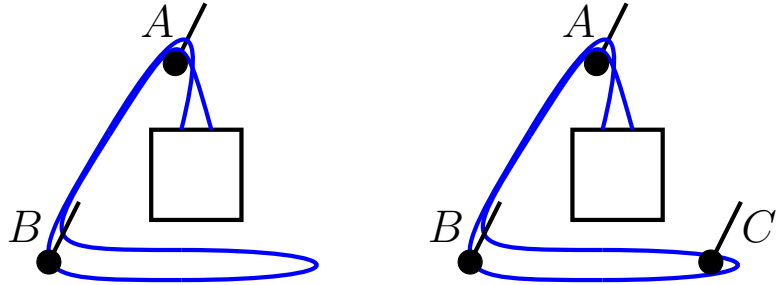

Figure~\ref{F:counterMonotonicity} (left) shows a situation where the picture would fall if we let go, while it would have remained hanging if the nail $A$ hadn't been there. This example might seem contrived, as we normally think of the picture as hanging in equilibrium and the nails being removed one by one. 

But then it's even worse: The condition for the picture to remain hanging might not even be a function of the set of remaining nails, but could depend on the order in which they are removed. To see this, we just modify the previous example using a third nail $C$ as in Figure~\ref{F:counterMonotonicity} (right).

These examples obviously depend on a number of assumptions about the exact positioning of the nails and the wire, friction, gravity, and so on. Here we simply conclude that this is too complicated, and in the following we consider the group-theoretical model only.

That said, it would obviously be interesting to have an improvement of the results of the following section in the spirit of the Paterson solution \cite{Demaine} to the original $n$-nails puzzle.

\section{The $k$-out-of-$n$ puzzle}

As was pointed out by Demaine et al \cite{Demaine}, the sheer number of monotone boolean functions on $n$ variables (the so-called Dedekind number) implies that most of them will require almost $2^n$ symbols for a representation as a word in the free group. In this perspective, the so-called $k$-out-of-$n$ problem, where the picture should remain hanging if and only if at least $k$ out of $n$ nails remain, turns out to admit an unusually efficient representation. 

In \cite{Demaine}, a construction based on a sorting network shows that $k$-out-of-$n$ has a representation of length $O(n^{1,561,600})$. The exponent can be improved by simplifying the logical gates as in Section~\ref{F:fundamental}, but it will still be in the thousands as long as it relies on sorting networks of the AKS-type. Here we establish a polynomial bound with a reasonable exponent, although we don't have an efficient method of \emph{computing} what that representation looks like.

\subsection*{Divide-and-conquer}
We first mention a divide-and-conquer solution that achieves a representation of ``quasi-polynomial''  length $n^{O(\log n)}$. This is already shorter than ``most'' monotone boolean functions. Conceivably the method can be improved, but it doesn't quite seem to give a polynomial bound.

Suppose we have $2n$ generators $x_1,\dots, x_{2n}$. Inductively let $A_k$ be a representation of \emph{at least $k$ out of $x_1,\dots,x_n$}, and similarly let $B_k$ represent \emph{at least $k$ out of $x_{n+1},\dots,x_{2n}$} (if $k>n$, then $A_k = B_k = 1$). Now we can represent $k$ out of $2n$ by 
\begin{equation} \label{divideandconquer} B_k\cdot (A_1B_{k-1}A_1^{-1}B_{k-1}^{-1}) \cdot (A_2B_{k-2}A_2^{-1}B_{k-2}^{-1}) \cdots (A_{k-1}B_1A_{k-1}^{-1}B_1^{-1}) \cdot A_k.\end{equation}
If fewer than $k$ generators remain, \eqref{divideandconquer} will collapse, while if exactly $k$ remain, then exactly one of the factors will be nontrivial. We don't need any padding to make each factor safe, since $A_i$ and $B_{k-i}$ involve disjoint sets of generators. 

An inductive argument shows that if $n$ is a power of 2, say $n=2^a$, then the length of the word representing $k$-out-of-$n$ will be at most
\[ 2^{a(a+3)/2} = n^{(a+3)/2} = n^{\frac12\log_2{n} + 3/2}.\]
This bound can be improved slightly, but we don't seem to get a polynomial bound on the length of the word in this way. 

\subsection*{A probabilistic approach}
Using instead a probabilistic method, we show that for arbitrary $k$ and $n$ with $1\leq k\leq n$, there exists a word of length $O(n^{7.004})$ in the generators $x_1,\dots, x_n$ that remains nontrivial if all but $k$ generators are set to 1, but collapses if only $k-1$ generators remain. The proof is based on a method of Leslie Valiant \cite{Valiant}, see also \cite{Damgaard, Goldreich}.
 
\begin{Thm} \label{T:main}
For $n$ and $k$ with $1\leq k\leq n$, the $k$-out-of-$n$ function can be represented as a picture-hanging by a word of length $O(n^c)$, where 
\begin{equation} \label{defc} c = \log_{3/2}(6) + \log_2(6) \approx 7.004.\end{equation}
\end{Thm}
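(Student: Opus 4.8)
The plan is to build the $k$-out-of-$n$ word by a recursion that, at each level, reduces a threshold-$k$ gate on $n$ inputs to an OR of threshold gates on inputs of size roughly $2n/3$, at the cost of a constant blow-up in the number of gates and in the threshold. This is exactly the shape of Valiant's construction of monotone formulas for the majority/threshold functions, and the point of invoking it here is that it yields a \emph{polynomial-size} monotone formula rather than the quasi-polynomial size one gets from the naive divide-and-conquer in \eqref{divideandconquer}. Once we have a monotone formula of size $O(n^c)$ computing $k$-out-of-$n$, the second proof of Theorem~\ref{T:fundamental} — repeatedly applying the safe OR-gate \eqref{safeOR} and the safe AND-gate \eqref{safeAND} — turns it into a word in the free group. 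So the real content is (i) getting the right monotone-formula size bound, and (ii) checking that the translation into a free-group word does not blow the length up by more than a constant factor per gate, so that the exponent $c$ in \eqref{defc} survives.

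For step (i): Valiant's trick is that a random monotone $O(1)$-size ``amplifier'' formula, applied to independent random restrictions that keep each variable with a suitable probability, sharpens a threshold. Concretely one shows that there is a fixed monotone function $g$ on a constant number $m$ of inputs (Valiant uses $m=2$ or $m=3$; with $m=3$ and $g(a,b,c) = (a\wedge b)\vee(b\wedge c)\vee(c\wedge a)$ one gets the relevant contraction) such that composing $g$ with itself $t$ times, feeding it independent copies of the $n$ variables, computes $k$-out-of-$n$ on the nose once $t = O(\log n)$ and once the thresholds are tracked correctly through the recursion. Each application of $g$ multiplies the number of leaves by $m$ and, crucially, shrinks the effective ``problem size'' by a factor related to the contraction rate of $g$ near its threshold — this is where the two logarithms in $c = \log_{3/2}(6)+\log_2(6)$ come from: one factor $3/2$ is the size-reduction rate and the $6$'s are the gate fan-in / leaf-multiplication costs of implementing one step of $g$ using the safe gates. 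I would set up the recursion $L(n,k) \le 6 \cdot L(\lceil 2n/3\rceil, k') $ for the word length, with one extra factor for the safe-gate padding, and solve it to get $L(n,k) = O(n^c)$.

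For step (ii), the padding: the safe OR \eqref{safeOR} and safe AND \eqref{safeAND} require $M$ to exceed the number of occurrences of $x^{\pm 1}$ in $A$ (resp. $y^{\pm1}$ in $B$), so $M$ can be taken to be the current word length, and the padding contributes $O(1)$ blocks each of length $O(M)$ — i.e. a constant-factor (not super-constant) overhead per gate, \emph{provided} we are a little careful that the two padding generators $x,y$ used at a given gate do not already occur too many times in the subwords being padded. Since at each level we have $n\ge 2$ generators available and the subformulas on disjoint input-blocks use disjoint generator sets (as in \eqref{divideandconquer}), we can always choose fresh-enough padding letters, or alternatively absorb the padding count into the $M$ bound; either way the per-gate cost is a constant factor, which is what keeps the exponent at $c$ rather than inflating it.

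The main obstacle I expect is \textbf{bookkeeping the threshold through the recursion and pinning down the exact constant $c$}: Valiant's argument is probabilistic (a random amplifier works with positive probability), so one must verify that for the specific sequence of thresholds arising from $k$-out-of-$n$ there really is a choice of restriction probabilities and a choice of the constant-size monotone gate making the composition compute the threshold \emph{exactly} (not just approximately), and then track the size recursion carefully enough that the solution is $n^{\log_{3/2}6 + \log_2 6}$ and not something with a larger exponent. The free-group translation, by contrast, is routine once the formula is in hand — its only subtlety is the padding-count check above, which is genuinely easy.
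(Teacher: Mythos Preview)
Your high-level instinct to invoke Valiant's probabilistic amplification is exactly right, but the concrete mechanism you describe does not match it and would not deliver the stated exponent. Valiant's construction does \emph{not} shrink the number of variables: there is no recursion of the form $L(n,k)\le 6\,L(\lceil 2n/3\rceil,k')$. Instead one fixes the $n$ generators once and for all, lets $W_0$ be a single uniformly random generator, and sets $W_{d+1}$ to be a safe majority-of-three gate applied to three independent copies of $W_d$. What changes with $d$ is the \emph{failure probability} $p_d$, not $n$. The two logarithms in $c$ arise from two separate phases of the analysis of the cubic $p\mapsto 3p^2-2p^3$: near $p=1/2$ the distance $1/2-p_d$ grows by a factor $3/2$ per step (hence $\log_{3/2} n$ steps to get $p_d$ bounded away from $1/2$), and near $0$ the probability essentially squares (hence $\log_2 n$ further steps to push $p_d$ below $2^{-n}$, which is what the union bound over all $2^n$ nail-states requires). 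Your write-up mentions neither the two-phase analysis nor the union bound, and your single recursion, if solved, gives only $n^{\log_{3/2}6}$ --- the $\log_2 6$ term is simply missing.

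Separately, routing the construction through a monotone formula and then translating with the safe AND/OR gates \eqref{safeOR}--\eqref{safeAND} would \emph{not} give the factor $6$ per level that the exponent depends on. Implementing $\mathrm{maj}_3(A,B,C)=(A\wedge B)\vee(B\wedge C)\vee(C\wedge A)$ via those gates costs at least a factor $12$ in length per level (each input appears four times before any padding), yielding an exponent near $\log_{3/2}12+\log_2 12\approx 9.7$. The paper instead uses a single \emph{direct} safe majority gate in the free group, namely padded $ABCA^{-1}B^{-1}C^{-1}$, which has each of $A,B,C$ appearing exactly twice and hence gives the factor $6$. Finally, your padding argument relies on the subwords using disjoint generator sets; in this construction they do not (all $W_d$ live on the same $x_1,\dots,x_n$), so the padding must be controlled differently --- by bounding the multiplicity of the chosen padding letter inside the subword, which is small relative to the word length once the words exceed length $2n$.
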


\subsection*{The case $n=2k-1$}
We first establish the special case of $n= 2k-1$, where $n$ is odd and the picture is required to hang precisely when a majority of the nails remain. Then we show how to modify the argument for the case of general $k$.

We construct recursively a sequence $W_0, W_1, W_2,\dots$ of random words in the generators $x_1,\dots, x_n$ that will eventually tend to have high probability of evaluating to something nontrivial if $k$ of the generators remain, and to the identity if only $k-1$ of them remain.

For convenience, in the following analysis of probabilities we can think of the two specific scenarios of retaining the first $k-1$ and the first $k$ nails respectively. Therefore we let $\phi_i$ be the homomorphism of the free group generated by $x_1,\dots, x_n$ to itself given by mapping $x_1,\dots, x_i$ to themselves, and all the following generators to 1. Since the following construction of $W_0, W_1, W_2,\dots$ will be symmetric under all permutations of the generators, it suffices to analyze the probabilities that $\phi_{k-1}(W_d) =1$ and that $\phi_k(W_d) =1$ respectively.

The depth zero word $W_0$ is just a single symbol uniformly chosen from the $n$ symbols $x_1,\dots, x_n$.
Clearly the probability that $W_0$ ``hangs'' on the first $i$ nails is given by \[ P(\phi_i(W_0)\neq 1) = \frac in.\]

We would like it to hang on the first $k$ nails but not on the first $k-1$, so we can say that the probability $p_0$ of ``failure'' is the same in both directions:

\[ p_0 = P(\phi_{k-1}(W_0)\neq 1) = P(\phi_k(W_0) = 1) = \frac12 - \frac1{2n}.\] 

To build recursively the random depth $d+1$ word $W_{d+1}$, we combine three independently chosen words of depth $d$ with a \emph{safe majority-gate}. The safe majority gate is constructed from the expression $ABCA^{-1}B^{-1}C^{-1}$ by padding:

\begin{equation} \label{safeMAJORITY} 
x^MAx^{-M}y^MBy^{-M}z^MCz^{-M}x^MA^{-1}x^{-M}y^MB^{-1}y^{-M}z^MC^{-1}z^{-M}.
\end{equation}

As the depth $d$ increases, the word $W_d$ will distinguish more and more clearly between the cases of a majority or minority of generators remaining. We wish to bound the probability of failure, and therefore let 
\[ p_d = P(\phi_{k-1}(W_d)\neq 1) = P(\phi_k(W_d) = 1).\] 
Since $W_{d+1}$ fails precisely when at least two out of three independent words of distribution $W_d$ fail, we have 
\[p_{d+1} =p_d^3 + 3p_d^2(1-p_d) = 3p_d^2 -2p_d^3.\]

Next we want to show that $p_d$ approaches zero reasonably quickly. We analyze separately the first phase, where $p_d$ is close to $1/2$ but its distance to $1/2$ grows exponentially, and the second phase where $p_d$ is close to zero and is essentially squared in each step.
 
When $p_d$ is close to $1/2$, the distance to $1/2$ will increase by roughly a factor $3/2$ for each new level:
\begin{Lemma}
\[ \frac{1/2 - p_{d+1}}{p_{d+1}} \geq \frac32\cdot  \frac{1/2 - p_d}{p_d}.\]
\end{Lemma}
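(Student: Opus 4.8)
The plan is to manipulate the recursion $p_{d+1} = 3p_d^2 - 2p_d^3$ directly in terms of the quantity $r_d := (1/2 - p_d)/p_d$, which measures the relative distance of $p_d$ below $1/2$. First I would express $1/2 - p_{d+1}$ in closed form: since $3p^2 - 2p^3$ is the probability that at least two of three independent Bernoulli$(p)$ trials succeed, one has the symmetric identity $\tfrac12 - (3p^2-2p^3) = \bigl(\tfrac12 - p\bigr)\bigl(1 + 2p - 2p^2\bigr)$, which can be checked by expanding (or by noting that the map $p \mapsto 3p^2-2p^3$ sends $1/2$ to $1/2$ and is symmetric about that point, so $1/2-p$ must divide $1/2 - (3p^2-2p^3)$, and then dividing the cubic out). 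Granting this factorization, the left-hand ratio becomes
\[
\frac{1/2 - p_{d+1}}{p_{d+1}} = \frac{(1/2 - p_d)(1 + 2p_d - 2p_d^2)}{p_d^2(3 - 2p_d)} = r_d \cdot \frac{1 + 2p_d - 2p_d^2}{p_d(3 - 2p_d)}.
\]
So the lemma reduces to the purely numerical claim that $\dfrac{1 + 2p - 2p^2}{p(3-2p)} \geq \dfrac32$ for all $p$ in the relevant range, i.e.\ for $0 < p \le 1/2$ (recall $p_0 = 1/2 - 1/(2n) < 1/2$ and the recursion keeps $p_d$ in $(0,1/2)$, decreasing).

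Next I would verify that inequality. Clearing the positive denominator $p(3-2p)$, it is equivalent to $2(1 + 2p - 2p^2) \ge 3p(3-2p)$, that is $2 + 4p - 4p^2 \ge 9p - 6p^2$, i.e.\ $2p^2 - 5p + 2 \ge 0$. The quadratic $2p^2 - 5p + 2$ factors as $(2p-1)(p-2)$, which is $\ge 0$ exactly when $p \le 1/2$ or $p \ge 2$; in particular it holds (with equality only at $p = 1/2$) throughout $0 < p \le 1/2$. Since all the $p_d$ lie in that interval, the chain
\[
\frac{1/2 - p_{d+1}}{p_{d+1}} = r_d \cdot \frac{1 + 2p_d - 2p_d^2}{p_d(3-2p_d)} \ge \frac32 \, r_d = \frac32 \cdot \frac{1/2 - p_d}{p_d}
\]
gives the stated bound.

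The only real obstacle is getting the algebraic factorization $\tfrac12 - (3p^2 - 2p^3) = (\tfrac12 - p)(1 + 2p - 2p^2)$ right and then correctly reducing the rational inequality to the clean quadratic $(2p-1)(p-2) \ge 0$; both are routine but must be done carefully with signs. I should also state explicitly (perhaps as a preliminary remark, using the monotonicity of $p \mapsto 3p^2-2p^3$ on $[0,1/2]$ together with $p_0 < 1/2$) that every $p_d$ lies in $(0, 1/2)$, since the inequality $(2p-1)(p-2)\ge 0$ is what forces the direction of the bound and fails once $p$ exceeds $1/2$. No deep input is needed beyond the recursion already derived in the text.
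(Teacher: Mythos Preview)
Your proof is correct and follows essentially the same route as the paper: substitute $p_{d+1}=3p_d^2-2p_d^3$, clear denominators, and factor the resulting polynomial to see nonnegativity on $(0,1/2]$. The only cosmetic difference is that you pull out the factor $(1/2-p)$ at the start and reduce to the quadratic $(2p-1)(p-2)\ge 0$, whereas the paper keeps everything together and factors the full quartic as $2p(2-p)(p-1/2)^2$; these are the same identity, since $p\,(1/2-p)\,(2p-1)(p-2)=2p(2-p)(p-1/2)^2$.
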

\begin{proof}
We let $p_d=p$ and plug in $p_{d+1} = 3p^2-2p^3$. Then we are comparing 
\[  \frac{1/2 - 3p^2 + 2p^3}{3p^2 - 2p^3} \quad \text{ to } \quad \frac32\cdot\frac{1/2 - p}{p}. \]
After clearing the denominators, we find that the difference 
\[ 2p(1/2 - 3p^2 + 2p^3) - 3(1/2-p)(3p^2 - 2p^3)\] factorizes as 
\[ 2p(2-p)(p-1/2)^2,\] which is nonnegative in the interval $0\leq p\leq 1/2$.
\end{proof}

It follows inductively that 
\[ \frac{1/2 - p_d}{p_d} \geq \frac{1/2 - p_0}{p_0} \cdot \left(\frac32\right)^d \geq n\cdot \left(\frac32\right)^d.\] 
This means in particular that if $p_d\geq 1/4$, then $(3/2)^d \leq n$. We conclude that in order to ensure that $p_d\leq 1/4$, it suffices to go to depth $d\geq \log_{3/2}(n)$.

After just one more step, the probability of failure will be smaller than $1/6$, and will then quickly approach zero, roughly squaring in each step:

\begin{Lemma}
\[ 3p_{d+1} \leq (3p_d)^2.\]
\end{Lemma}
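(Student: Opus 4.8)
The plan is to argue by direct substitution, exactly as in the preceding lemma. Recall that the recursion reads $p_{d+1} = 3p_d^2 - 2p_d^3$, so multiplying through by $3$ gives $3p_{d+1} = 9p_d^2 - 6p_d^3$. On the other hand $(3p_d)^2 = 9p_d^2$. First I would form the difference $(3p_d)^2 - 3p_{d+1} = 9p_d^2 - (9p_d^2 - 6p_d^3) = 6p_d^3$.

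Since $p_d$ is a probability (indeed $0 \le p_d \le 1/2$ throughout the argument), we have $6p_d^3 \ge 0$, hence $3p_{d+1} \le (3p_d)^2$, which is the claim. No case analysis or inequality manipulation beyond this single nonnegativity observation is needed, so there is essentially no obstacle here; the only thing to keep straight is that the $-2p_d^3$ term in the recursion is precisely what makes $p_{d+1}$ smaller than the ``pure squaring'' bound $3p_d^2$ suggested by comparison with $(3p_d)^2/3$. The content of the lemma is really just that this quadratic-with-a-helpful-cubic-correction shape lets one iterate: once $p_d < 1/6$, the quantity $3p_d$ is below $1$ and gets squared at each further step, so it decays doubly exponentially in the remaining depth.
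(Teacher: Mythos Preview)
Your proof is correct and is essentially identical to the paper's own argument: both substitute $p_{d+1}=3p_d^2-2p_d^3$, compute $3p_{d+1}=9p_d^2-6p_d^3$, and observe that this is at most $9p_d^2=(3p_d)^2$ since $6p_d^3\ge 0$.
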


\begin{proof}
\[ 3\cdot (3p^2-2p^3) = 9p^2-6p^3 \leq 9p^2 = (3p)^2.\]
\end{proof}
Therefore, if $p_a \leq 1/6$ so that $3p_a\leq 1/2$, then \[3p_{a+b} \leq (1/2)^{2^b}.\]
It follows that \[p_d < 2^{-n}\] for some 
\begin{equation} \label{suchd} d = \log_{3/2}(n) + \log_2(n) + O(1).\end{equation}

Since the construction of $W_d$ is symmetric in the generators $x_1,\dots, x_n$, the estimated failure probability holds for every possible state (present or removed) of the set of nails.
We conclude that for $d$ as in \eqref{suchd}, the probability that $W_d$ fails on as much as a single one of the $2^n$ states of the nails is smaller than 1. Therefore there has to be a word among the possibilities for $W_d$ that doesn't fail on any of the states.

Next we analyse the length of the word $W_d$. If it wasn't for the padding, $W_d$ would have length 
\[ 6^d  = O\left(6^{\log_{3/2}(n) + \log_2(n)}\right) = O\left(n^{ \log_{3/2}(6) + \log_2(6)}\right)\]
in accordance with \eqref{defc}.

It can be verified that the padding in total only increases the length by another constant factor: Before the words reach length $2n$, the padding consists of only one symbol at each end. In this phase, the length (1, 18, 120 etc) of the word $X_d$ is exactly \[ \frac{17\cdot 6^d - 12}5,\] and in particular only imposes an extra factor bounded by $17/5$.
Once the words $W_d$ reach length $2n$ (so that they might have one pair of inverses of each of the $n$ generators), the contribution of the padding can be estimated by a factor $1+2/n$ in each step, which grows to a factor of $1+O(\log n /n)$ over the remaining $O(\log n)$ steps. 

Notice here that we can choose the padding after the random choices of generators in the words.

\subsection*{General $k$}
Finally we establish Theorem~\ref{T:main} for general $k$. 
To do this, we modify the first step of the construction by letting $W_0$ be a product of a suitably chosen, possibly random, number $m$ of symbols taken uniformly from $x_1,\dots, x_n$ and conditioned on being distinct. That is, given $m$, they are chosen with uniform distribution on the $n \choose m$ sets of $m$ generators.
This product will act as an OR-gate, with nontrivial evaluation if and only if at least one of the generators in the word is not quotiented out. 

We still want the ``failure probability'' to be the same in both directions, that is, we want to choose $m$ so that
\[ P(\phi_{k-1}(W_0)\neq 1) = P(\phi_k(W_0)=1).\]
This is not possible for a fixed value of $m$ except in some special cases (like $k=3$, $n=9$, $m=2$). But it can obviously be achieved if we choose $m$ randomly, and we can let $m$ have support on the two consecutive integers where $P(\phi_{k-1}(W_0)\neq 1)$ overtakes $P(\phi_k(W_0)=1)$.
If $k>n/2$, $m$ will be either 0 or 1 and the following analysis rather trivial, but the case $k<n/2$ requires a little bit of calculation.

The failure probabilities can be thought of in terms of the minimal index in the word $W_0$: Relative to $\phi_{k-1}$, the word $W_0$ fails when the minimal index is strictly smaller than $k$ so that $\phi_{k-1}(W_0) \neq 1$. On the other hand relative to $\phi_k$, it fails if the minimal index is strictly larger than $k$ so that $\phi_k(W_0)=1$.

Assuming that the two failure probabilities are equal, we get a lower bound on the probability that the minimum index is equal to $k$:
\begin{multline} \notag P(\text{minimum index $=k$}) \\= P(\text{$x_k$ occurs in $W_0$}) \cdot P(\text{remaining indices $>k$}) 
 \\ \geq \frac{E(m)}n \cdot \frac12, \end{multline}
since the probability that all indices are $\geq k$ is at least $1/2$ even without conditioning on one of them being equal to $k$.
This gives a bound away from $1/2$ on the failure probability $p_0$:
 \begin{equation} \label{boundaway} p_0 = \frac{1 - P(\text{minimum index $=k$})}{2} \leq \frac12 - \frac{E(m)}{4n}.\end{equation}
 
 Even for large $k$ we have $E(m)\geq 1/2$. Therefore in the regime $n/2 \leq k \leq n$ it's clear that the modified first step only affects the implied constant in Theorem~\ref{T:main}. 

For small $k$, and thereby possibly large $m$, the modified first step imposes an extra factor of $E(m) + O(1)$ on the length of the word $W_0$ and thereby on all subsequent words. But by \eqref{boundaway}, it also gives an extra factor of order $m$ in $1/2 - p_0$. This is better than what is provided by the initial phase of the majority function, where increasing $1/2 - p_d$ by a factor $3/2$ imposes a factor $6$ on the length of the word, and therefore increasing it by a factor $t$ increases the length of the word by $t^{\log_{3/2}(6)}$. 

Therefore when $m$ is large, the extra length of the word $W_0$ will be more than compensated for by fewer rounds of the majority function needed. The case of general $k$ therefore gives, up to a constant factor, the same bound on the length of the word as the case $n=2k-1$. This concludes the proof of Theorem~\ref{T:main}.

\subsection*{Finding hay in a haystack}
While rigorously demonstrating the existence of a polynomial-size word that solves the $k$-out-of-$n$-problem, our argument doesn't provide an efficient recipe for finding it. 

But the situation isn't that bad: if we take the recursion involving the majority function just one more step than is needed in the proof of Theorem~\ref{T:main}, the failure probability on random input will essentially drop to just $2^{-2n}$. This means that the probability that there is even a single state where the word fails is now exponentially small. The computational problem therefore has the character of ``finding hay in a haystack''.

\end{document}